\theoremstyle{plain}  
\newtheorem{thm}{Theorem}[section] 
\newtheorem{prop}[thm]{Proposition} 
\newtheorem{cor}[thm]{Corollary} 
\theoremstyle{definition} 
\newtheorem{defn}{Definition}[section]
\theoremstyle{remark} 
\newtheorem*{rem}{Remark}
\newcommand{\prob}{\mathbf{P}}
\newcommand{\E}{\mathbb{E}}
\newcommand{\essinf}{\operatorname{ess\, inf}}
\newcommand{\VaR}{\operatorname{VaR}}
\newcommand{\ES}{\operatorname{ES}}
\newcommand{\argmin}{\operatorname{arg\, min}}
\begin{document}

\title{Coherence and elicitability}
\author{Johanna F.~Ziegel\thanks{The author would like to thank Paul Embrechts, Tilmann Gneiting and Fabio Bellini for discussions. \newline
Address correspondence to Johanna F.~Ziegel, University of Bern, Department of Mathematics and Statistics,
Institute of Mathematical Statistics and Actuarial Science, Sidlerstrasse 5, 3012 Bern, Switzerland, e-mail:\texttt{johanna.ziegel@stat.unibe.ch}}}
\date{}
\maketitle

\begin{abstract}
The risk of a financial position is usually summarized by a risk measure. As this risk measure has to be estimated from historical data, it is important to be able to verify and compare competing estimation procedures. In statistical decision theory, risk measures for which such verification and comparison is possible, are called elicitable. It is known that quantile based risk measures such as value at risk are elicitable. In this paper, Gneiting's (2011) result of the non-elicitability of expected shortfall is extended to all law-invariant spectral risk measures unless they reduce to minus the expected value. Hence, it is unclear how to perform forecast verification or comparison. However, the class of elicitable law-invariant coherent risk measures does not reduce to minus the expected value. We show that it consists of certain expectiles.
\end{abstract}
\noindent\textit{Keywords:} Coherent risk measures; Decision theory; Elicitability; Expected shortfall; Expectiles; Law-invariant risk measures; Spectral risk measures

\section{Introduction}\label{sec:intro}

Value at Risk (VaR) is the most common risk measure used in banking and finance. The \emph{VaR at level} $\alpha \in (0,1)$ is given by 
\[
\VaR_{\alpha}(Y) = -\inf\{x \in \mathbb{R}\;|\; F_Y(x) \ge \alpha\},
\]
where the financial position $Y$ is a real-valued random variable, and $F_Y$ is its cumulative distribution function.  In this paper, a positive value of $Y$ denotes a profit. The sign convention we have chosen for $\VaR$ implies that extreme losses correspond to levels $\alpha$ close to zero, and for $Y \le 0$, the risk $\VaR_{\alpha}(Y)$ will be non-negative. 
Since the influential paper of \citet{ArtznerDelbaenETAL1999} introduced coherent risk measures, VaR has frequently been criticized as a risk measure because it fails to be subadditive, and hence it is not coherent; see for example \citet{Acerbi2002}. Other authors have pointed out the lack of VaR at level $\alpha$ to account for the size of losses beyond the level $\alpha$  \citep{DanielssoEmbrechtsETAL2001}. Median shortfall at level $\alpha$, or equivalently, VaR at level $(\alpha+1)/2$ \citep{KouPengETAL2013}, does account for the size of losses beyond level $\alpha$. 

The \citet{Basel2012} has been investigating the points in favor of and against a change of the regulatory risk measure from VaR to the coherent risk measure \emph{expected shortfall (ES)}, also known as \emph{average} or \emph{conditional value at risk}, which is defined by
\[
\ES_{\alpha}(Y)=\frac{1}{\alpha}\int_0^{\alpha} \VaR_{\tau}(Y)d\tau.
\]
From the perspective of coherent risk measures, ES is a better alternative to VaR. It remedies both problems mentioned above: It is a coherent risk measure, and it is sensitive to the sizes of the potential losses beyond the threshold $\alpha$. Other popular coherent risk measures are the so-called \emph{spectral risk measures}, which generalize ES \citep{Acerbi2002}.

However, despite their theoretical appeal, there are also major drawbacks to using spectral risk measures in risk management, which should not be neglected. \citet{ContDeguestETAL2010} show that there is a fundamental theoretical conflict between subadditivity and robustness of risk measurement procedures for spectral risk measures; see also the related discussion in \citet[Section 5]{KouPengETAL2013}. \citet{ContDeguestETAL2010}  state
\begin{quote}
We hope to have convinced the reader that there is more to risk measurement than the choice of a `risk measure': statistical robustness, and not only `coherence', should be a concern for regulators and end-users when choosing or designing risk measurement procedures. The design of robust risk estimation procedures requires the explicit inclusion of the statistical estimation step in the analysis of the risk measurement procedure.
\end{quote}
The next steps beyond estimation are backtesting and forecast verification.  Backtesting refers to validating a given estimation procedure for a risk measure on historical data. In this paper, following the ideas of \citet{Gneiting2011}, we consider risk measures from a forecasting perspective. With our knowledge of today, we are trying to give the best possible point estimate of the risk measure for tomorrow, or ten days ahead, or for any other time point in the future. There are numerous choices concerning models, methods and parameters that have to be made to come up with predictions. Hence, for a number of competing forecast or estimation procedures we would like to decide which one performs best.  If we restrict our attention to law-invariant coherent risk measures as introduced by \citet{Kusuoka2001} we can view them as functionals on some set $\mathcal{P}$ of probability distributions on $\mathbb{R}$. From the viewpoint of statistical decision theory not all functionals allow for meaningful point forecasts; see \citet{Gneiting2011}. Functionals for which meaningful point forecasts and forecast performance comparisons are possible are called \emph{elicitable}; see Section \ref{sec:elicit} for details. One important example of elicitable functionals are quantiles, hence VaR is elicitable. 

\citet{Gneiting2011} has shown that ES is not elicitable, which may be a partial explanation for the difficulties with robust estimation and backtesting. This raises the natural question whether there is a  different  option. Is there any (interesting) law-invariant coherent risk measure that is also an elicitable functional? We show that the only law-invariant spectral risk measure that is also elicitable is minus the expected value:
\[
\rho(Y) = -\E[Y];
\]
see Corollary \ref{cor:2}. However, there are law-invariant coherent risk measures that are elicitable. They are \emph{expectiles} which were first introduced by \citet{NeweyPowell1987}. The elicitability of expectiles is a simple corollary of their definition. They have been considered as a risk measure by \citet{KuanYehETAL2009}. Proposition \ref{prop:4.4} shows that they are coherent risk measures. Very recently, and independently of our work, a proof of this result also appears in \citet{BelliniKlarETAL2013}. Proposition \ref{prop:4.4} also identifies the minimal generating set of the Kusuoka representation as defined in \citet[Definition 2.3]{PichlerShapiro2012}. Expectiles are the only elicitable law-invariant coherent risk measures; see Section \ref{sec:expectiles}.

In the literature, there are procedures for evaluating ES forecasts and that allow for tests; see for example \citet{McNeilFrey2000,Christoff2003}. However, these methods do not allow for a direct comparison and ranking of the predictive performance of competing forecasting methods \citep{Gneiting2011}.

The non-elicitability of spectral risk measures, and in particular of ES, is the reason that there is no analogue to the quantile regression method \citep{Koenker2005} for these functionals, and no M-estimators can be constructed. \citet{ChunShapiroETAL2012} construct a mixed quantile estimator for ES as an approximation to an M-estimator. The recent contribution of \citet{RockafellRoysetETAL2013} takes this approach further and proposes a framework for `generalized' regression that is suitable for ES. However, the problem with forecast comparison remains. 

The paper is organized as follows. In Section \ref{sec:elicit} we introduce the notion of elicitability and describe its importance in point forecasting. A brief introduction to law-invariant coherent risk measures is given in Section \ref{sec:riskmeasures}. Section \ref{sec:coherelicit} contains the main results of the paper, showing in particular that law-invariant spectral risk measures are not elicitable, and elaborating the prominent role of expectiles as the only elicitable law-invariant coherent risk measures. We conclude the paper with a discussion; see Section \ref{sec:disc}.

\section{Elicitability}\label{sec:elicit}

Let $\mathcal{P}$ be a class of probability measures on $\mathbb{R}$ with the Borel sigma algebra. We consider a functional 
\[
\nu:\mathcal{P} \to 2^{\mathbb{R}}, \quad P \mapsto \nu(P) \subset \mathbb{R},
\]
where $2^{\mathbb{R}}$ denotes the power set of $\mathbb{R}$. Often, but not always, $\nu(P)$ is single valued, for example if we consider the expectation functional $\E$ on the class of all probability measures with finite mean. However, quantile functionals may be set-valued. In the case of single valued functionals we will confound the one-point set $\nu(P)$ with its unique element. 

In this paper we are interested in the statistical properties of functionals that are law-invariant coherent risk measures; see Section \ref{sec:riskmeasures}.
The following  Definitions \ref{def:consist} and \ref{def:elicit} are central in the context of point forecasting; see \citet[Section 2]{Gneiting2011} for a discussion of their historical background. Let $Y$ be a real-valued random variable, which models the future observation of interest. 

\begin{defn}\label{def:consist}
A scoring function $s:\mathbb{R}\times\mathbb{R} \to [0,\infty)$ is \emph{consistent} for the functional $\nu$ relative to the class $\mathcal{P}$, if 
\begin{equation}\label{eq:consistency}
\E_P s(t,Y) \le \E_P s(x,Y) 
\end{equation}
for all $P \in \mathcal{P}$, all $t \in \nu(P)$, and all $x \in \mathbb{R}$. Here, $Y$ has distribution $P$. It is \emph{strictly consistent} if it is consistent and equality in \eqref{eq:consistency} implies that $x \in \nu(P)$.
\end{defn}

Given a consistent scoring function $s$ for a functional $\nu$, an \emph{optimal forecast} $\hat{x}$ for $\nu(P)$ is given by
\[
\hat{x} = \argmin_{x} \E_P s(x,Y).
\]
Competing forecast procedures for $\nu$ can be compared using the scoring function $s$. Suppose that in $n$ forecast cases we have point forecasts $x^{(k)}_1, \dots, x^{(k)}_n$, $k=1,\dots,K$, and realizing observations $y_1,\dots,y_n$. The index $k$ numbers the $K$ competing forecast procedures. We can rank the procedures by their average scores
\[
\bar{s}^{(k)} = \frac{1}{n}\sum_{i=1}^n s\big(x^{(k)}_i,y_i\big).
\]
The consistency of the scoring rule for the functional $\nu$ ensures that accurate forecasts of $\nu(P)$ are rewarded. On the contrary, evaluating point forecasts with respect to `some' scoring function, which is not consistent for $\nu$, may lead to grossly misguided conclusions about the quality of the forecasts. A drastic example is provided in the simulation study of \citet[Section 1.2]{Gneiting2011}. Summarized in rough terms, one can construct realistic examples where the performance of  skilful statistical forecasts is ranked worse than an ignorant no-change forecast when evaluated by `some' scoring function, such as the absolute error or the squared error, for example. Therefore, point forecasts for a functional $\nu$ have to be evaluated by means of a scoring function, which is consistent for $\nu$.

\begin{defn}\label{def:elicit}
A functional $\nu$ is \emph{elicitable} relative to the class $\mathcal{P}$, if there exists a scoring function $s$ which is strictly consistent for $\nu$ relative to $\mathcal{P}$.
\end{defn}

Many interesting functionals are elicitable and a wealth of examples is given in \citet{Gneiting2011}. The most prominent example concerning risk management may be VaR, which is essentially a quantile and as such elicitable. The scoring functions that are consistent for $\alpha$-quantiles have been characterized by \citet{Thomson1979,Saerens2000}; see also \citet[Theorem 9]{Gneiting2011}. Subject to some regularity and integrability conditions, they are given by
\begin{equation}\label{eq:quantscore}
s(x,y) = (\mathbbm{1}\{x\ge y\} - \alpha)(g(x) - g(y)),
\end{equation}
where $g$ is an increasing function and $\mathbbm{1}$ denotes the indicator function.

However, not all functionals are elicitable, the most striking example in the present context being ES. The following necessary condition is due to \citet{Osband1985}; see also \citet{LambertPennockETAL2008}. As this theorem is central to the results presented in this paper, we provide a proof.

\begin{thm}[Osband]\label{thm:elicconv}
An elicitable functional $\nu$ has \emph{convex level sets} in the following sense: If $P_0 \in \mathcal{P}$ and $P_1 \in \mathcal{P}$, and $P^* = pP_0 + (1-p)P_1 \in \mathcal{P}$ for some $p \in (0,1)$, then $t \in \nu(P_0)$ and $t \in \nu(P_1)$ imply $t \in \nu(P^*)$.
\end{thm}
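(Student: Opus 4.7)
The plan is to exploit the strict consistency of a scoring function together with the linearity of expectation under mixtures of distributions. Since $\nu$ is elicitable, there exists a strictly consistent scoring function $s : \mathbb{R} \times \mathbb{R} \to [0,\infty)$ for $\nu$ relative to $\mathcal{P}$. The entire argument will be a short calculation playing this consistency off against the mixture identity $\E_{P^*} h(Y) = p\,\E_{P_0} h(Y) + (1-p)\,\E_{P_1} h(Y)$ for any integrable $h$.

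First I would fix $t \in \nu(P_0) \cap \nu(P_1)$ and an arbitrary $x \in \mathbb{R}$. Applying \eqref{eq:consistency} at $P_0$ and $P_1$ gives $\E_{P_0} s(t,Y) \le \E_{P_0} s(x,Y)$ and $\E_{P_1} s(t,Y) \le \E_{P_1} s(x,Y)$. Taking the convex combination with weights $p$ and $1-p$ and using the mixture identity yields
\[
\E_{P^*} s(t,Y) \le \E_{P^*} s(x,Y),
\]
so $t$ minimizes $x \mapsto \E_{P^*} s(x,Y)$.

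To upgrade this to the conclusion $t \in \nu(P^*)$, I would argue by contradiction using the \emph{strict} consistency of $s$. Pick any $x^* \in \nu(P^*)$ (such an $x^*$ exists; otherwise the statement is vacuous since $\nu(P^*) = \emptyset$ cannot fail to contain $t$ in any nontrivial sense, but in the usual setup $\nu$ takes nonempty values on $\mathcal{P}$). Strict consistency at $P^*$ applied to $x^* \in \nu(P^*)$ and the point $t$ gives $\E_{P^*} s(x^*, Y) \le \E_{P^*} s(t,Y)$, with equality forcing $t \in \nu(P^*)$. Combined with the reverse inequality from the previous step (taking $x = x^*$), equality must hold, and hence $t \in \nu(P^*)$.

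I do not expect any real obstacle; the proof is essentially a one-line application of linearity of expectation. The only subtlety worth flagging is that we genuinely need \emph{strict} consistency (not just consistency) to go from ``$t$ is a minimizer at $P^*$'' to ``$t \in \nu(P^*)$'', which is exactly where the definition of elicitability is used in full strength.
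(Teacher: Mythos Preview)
Your proof is correct and follows essentially the same approach as the paper's: both use the mixture identity for expectations together with consistency at $P_0$ and $P_1$ to show that $t$ minimizes $x \mapsto \E_{P^*} s(x,Y)$, and then invoke strict consistency to conclude $t \in \nu(P^*)$. The paper's version is terser, collapsing your explicit contradiction/equality argument into the single phrase ``hence $t \in \nu(P^*)$,'' but the underlying logic is identical.
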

\begin{proof}
Let $s$ be a strictly consistent scoring function for $\nu$ and let $P_0$, $P_1$, $P^*$, $p$ and $t$ be as required in the Theorem. Then we obtain for any $x \in \mathbb{R}$ that
\begin{align*}
\E_{P^*} s(t,Y) & = p\E_{P_0} s(t,Y) + (1-p)\E_{P_1} s(t,Y) \\&\le p\E_{P_0} s(x,Y) + (1-p)\E_{P_1} s(x,Y) = \E_{P^*} s(x,Y),
\end{align*}
hence $t \in \nu(P^*)$.
\end{proof}

\section{Coherent risk measures}\label{sec:riskmeasures}

Let $(\Omega,\mathcal{F},\prob)$ be a standard probability space without atoms. A \emph{coherent risk measure} is a map $\rho:L^{\infty}(\Omega,\mathcal{F},\prob) \to \mathbb{R}$, which fulfils the following four properties. It is \emph{monotone}, so $X \ge Y$ implies $\rho(X) \le \rho(Y)$; it is \emph{subadditive}, that is $\rho(X+Y) \le \rho(X) + \rho(Y)$ for all $X,Y \in L^{\infty}(\Omega,\mathcal{F},\prob)$; it is \emph{positively homogeneous}, i.e.~for $\lambda \ge 0$, it holds that $\rho(\lambda Y) = \lambda \rho(Y)$. Finally, it is \emph{translation invariant} in the sense that for all $a \in \mathbb{R}$, we have $\rho(Y + a) = \rho(Y)-a$. 

Coherent risk measures were introduced by \citet{ArtznerDelbaenETAL1999}; see also \citet{Delbaen2002} and \citet[Chapter 4]{FollmerSchied2004}. All common coherent risk measures used in applications, share the property of \emph{law-invariance}. That is, if $X$ and $Y$ have the same distribution $P$ on $\mathbb{R}$, then
\[
\rho(P):= \rho(X) = \rho(Y).
\]
Law-invariant risk measures were characterized by \citet{Kusuoka2001}. His result was strengthened by \citet{JouiniSchachermETAL2006}. We summarize the result that is relevant for this paper in the following theorem; compare \citet[Theorem 2.1]{JouiniSchachermETAL2006}. Let $\mathfrak{P}$ denote the set of all probability measures on $[0,1]$ with the weak topology. For a cumulative distribution function $F:\mathbb{R} \to [0,1]$ we define its \emph{generalized inverse} or \emph{quantile function} by
\[
F^{-1}:[0,1] \to \mathbb{R}, \quad \alpha \mapsto \inf\{x \in \mathbb{R}\;|\; F(x) \ge \alpha\}.
\] 
\begin{thm}[Kusuoka]
Let $\rho$ be a law-invariant coherent risk measure. Then there exists a closed convex set $\mathcal{M} \subset \mathfrak{P}$ such that
\[
\rho(Y) = - \inf_{m \in \mathcal{M}} \int_{[0,1]} U_{\alpha}(Y) dm(\alpha),
\]
where 
\[
U_{\alpha}(Y) := \frac{1}{\alpha}\int_0^\alpha F_Y^{-1}(u) du, \quad \alpha \in (0,1],
\]
and $U_0(Y) = \essinf Y$.
\end{thm}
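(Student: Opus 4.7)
My plan is to combine the classical dual representation of coherent risk measures with the consequences of law-invariance on an atomless space, and then to recognise that the resulting integrals against $F_Y^{-1}$ are canonically parametrised by probability measures on $[0,1]$.

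First, since $\rho$ is coherent on $L^\infty$ (which, together with monotonicity and translation invariance, implies the Fatou property), the Artzner-Delbaen-Eber-Heath theorem yields
\[
\rho(Y) = -\inf_{Q \in \mathcal{Q}} \E\!\left[Y\,\frac{dQ}{d\prob}\right]
\]
for some weak$^*$-closed convex set $\mathcal{Q}$ of probability measures absolutely continuous with respect to $\prob$. Next, using law-invariance of $\rho$ together with atomlessness of $(\Omega,\mathcal{F},\prob)$, I would enlarge $\mathcal{Q}$ to be closed under law-preserving rearrangements of its densities. The Hardy-Littlewood-P\'olya rearrangement inequality then says that $\E[YZ]$ for fixed $Y$ and $Z\ge 0$ with prescribed law and unit mean is minimised by the countermonotone coupling, which is realisable thanks to atomlessness; consequently,
\[
\inf_{Q\in\mathcal{Q}} \E\!\left[Y\,\frac{dQ}{d\prob}\right] = \inf_{Q\in\mathcal{Q}} \int_0^1 \phi_Q(u)\, F_Y^{-1}(u)\, du,
\]
where $\phi_Q(u):= F_{dQ/d\prob}^{-1}(1-u)$ is a nonnegative, nonincreasing function on $(0,1)$ with unit integral.

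I would then invoke the bijection between such decreasing densities and probability measures on $(0,1]$: every such $\phi$ has a unique representation $\phi(u) = \int_{(u,1]} \alpha^{-1}\, dm(\alpha)$, and allowing an atom of $m$ at $\alpha=0$ (carrying the essential-infimum contribution via the convention $U_0(Y) = \essinf Y$) extends the correspondence to probability measures on $[0,1]$. Fubini's theorem gives
\[
\int_0^1 \phi(u)\,F_Y^{-1}(u)\, du \;+\; m(\{0\})\,\essinf Y \;=\; \int_{[0,1]} U_\alpha(Y)\, dm(\alpha).
\]
Defining $\mathcal{M}\subset\mathfrak{P}$ as the weak closure of the set of measures arising from $\{\phi_Q : Q\in\mathcal{Q}\}$, and noting that convexity is preserved by the linear bijection $m\leftrightarrow\phi$, produces a closed convex subset of $\mathfrak{P}$ delivering the claimed representation.

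The main obstacle is the reduction step. I need to verify both that $\mathcal{Q}$ can be chosen law-invariant in the density sense and that the countermonotone coupling of $dQ/d\prob$ with $Y$ is actually realised within $\mathcal{Q}$; both facts rely crucially on atomlessness, which guarantees measure-preserving transformations of $(\Omega,\mathcal{F},\prob)$ realising arbitrary couplings of prescribed marginals. The subsequent bijection between decreasing densities and measures on $[0,1]$ is classical but likewise demands care at the boundary $\alpha=0$ — this is precisely what forces the extension $U_0(Y)=\essinf Y$ in the statement.
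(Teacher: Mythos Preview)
The paper does not prove this theorem; it is quoted as a known result due to \citet{Kusuoka2001} and \citet{JouiniSchachermETAL2006}, so there is no in-paper proof to compare your proposal against. Your outline is essentially the standard route taken in those references: dual representation, rearrangement invariance of the dual set via atomlessness, Hardy--Littlewood to reduce to countermonotone couplings, and the bijection between decreasing unit-integral functions and probability measures on $[0,1]$.

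That said, there is a genuine error in your justification of the first step. You write that coherence on $L^\infty$ ``together with monotonicity and translation invariance, implies the Fatou property''. This is false: coherence already includes monotonicity and translation invariance, and there exist coherent risk measures on $L^\infty$ that fail the Fatou property (their dual representation then involves finitely additive measures, not elements of $L^1$). What actually delivers the Fatou property here is \emph{law-invariance} on an atomless space --- this is precisely the contribution of \citet{JouiniSchachermETAL2006}, and it is the nontrivial ingredient that upgrades Kusuoka's original result. Without it, your first displayed formula is not available, and the argument does not get off the ground. You should replace the parenthetical with an appeal to that result.

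A smaller point: your integral representation of $\phi$ should match the paper's convention $g_m(u)=\int_{[u,1]}\alpha^{-1}\,m(d\alpha)$ (closed at the lower limit, so that $g_m$ is left-continuous), and you should be explicit that the atom at $0$ arises not from any single $Q\in\mathcal{Q}$ but from taking the weak closure in $\mathfrak{P}$ of the set of measures $m$ obtained from the densities $\phi_Q$.
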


For $m \in \mathfrak{P}$, $Y \in L^{\infty}(\Omega,\mathcal{F},\prob)$ we define
\begin{equation}\label{eq:num}
\nu_m(Y) = \int_{[0,1]} U_{\alpha}(Y) m(d\alpha).
\end{equation}
Up to the sign, the $\nu_m$ are exactly the \emph{spectral risk measures} of \citet{Acerbi2002}.
The following alternative representation of $\nu_m$ will be useful in the following. It is a direct consequence of Fubini's theorem.
\begin{align}
\nu_m(Y) &= \int_{(0,1]} U_{\alpha}(Y)m(d\alpha) + m(\{0\})U_0(Y)\nonumber\\
&= \int_{(0,1]} \frac{1}{\alpha}\int_{(0,1]} \mathbbm{1}\{u \le \alpha\} F_Y^{-1}(u) du\ m(d\alpha) + m(\{0\})U_0(Y)\nonumber\\
&= \int_{(0,1]} \int_{[u,1]} \frac{1}{\alpha}  m(d\alpha) F_Y^{-1}(u) du + m(\{0\})U_0(Y)\nonumber\\
&= \E(g_m(U) F_Y^{-1}(U)) + m(\{0\})U_0(Y),\label{eq:altrep}
\end{align}
where $U$ is standard uniformly distributed, and $g_m:(0,1] \to [0,\infty)$ is given by
\[
g_m(u) = \int_{[u,1]} \frac{1}{\alpha}  m(d\alpha).
\]
Following \citet{PichlerShapiro2012} we call $g_m$ the \emph{spectral function} of $m \in \mathfrak{P}$. It is left-continuous, decreasing and $\int_0^1 g_m(u) du = m((0,1]) \le 1$. This implies in particular that the functional $\nu_m(Y)$ is finite for all $Y \in L^{\infty}(\Omega,\mathcal{F},\prob)$. We call the function $p \mapsto \int_p^1g_m(u)du$ the \emph{integrated spectral function} of $m$. As $\nu_m$ is law-invariant, we will also write $\nu_m(P):=\nu_m(Y)$ if $Y$ has distribution $P$.

\section{Coherence and elicitability}\label{sec:coherelicit}

\subsection{Coherent functionals with convex level sets}
If a functional $\nu$ is not elicitable with respect to a class of probability distributions $\mathcal{P}_0$, then it cannot be elicitable with respect to any larger class $\mathcal{P} \supset \mathcal{P}_0$. In particular, if the functional $\nu$ does not have convex level sets in the sense of Theorem \ref{thm:elicconv} for some class $\mathcal{P}_0$, it will fail to have convex level sets for any larger class $\mathcal{P}$ containing $\mathcal{P}_0$. A simple class $\mathcal{P}^*$ of probability distributions, that has proven useful to show the violations of the necessary condition for elicitability in Theorem \ref{thm:elicconv} is the class of two-point distributions, that is
\begin{equation}\label{eq:Pstar}
\mathcal{P}^* = \{p\delta_x + (1-p)\delta_y \;|\; x, y \in \mathbb{R}, p \in [0,1]\},
\end{equation}
where $\delta_x$ is the Dirac measure at the point $x \in \mathbb{R}$.

The following theorem summarizes the main results of this paper.

\begin{thm}\label{thm:1}
Let $\mathcal{P}^*$ be the class of two-point distributions on $\mathbb{R}$; see \eqref{eq:Pstar}. Let $\mathcal{M}\subseteq \mathfrak{P}$ be a closed set of probability measures on $[0,1]$, that does not contain $\delta_0$. If the functional
\[
\nu:\mathcal{P}^* \to \mathbb{R}, \quad P \mapsto \nu(P) = \inf_{m\in\mathcal{M}}\nu_m(P),
\]
with $\nu_m$ defined at \eqref{eq:num}, has convex level sets, then $m(\{0\})=0$ for all $m \in \mathcal{M}$, and
there exists a $C \in (0,1]$ such that all probability measures
\begin{equation}\label{eq:mp}
m_p = \frac{p(1-C)}{p(1-C) + C}\delta_p + \frac{C}{p(1-C) + C}\delta_1, \quad p \in (0,1),
\end{equation}
are contained in $\mathcal{M}$. Furthermore, for all measures $m \in \mathcal{M}$, there exists a $q \in (0,1)$ such that
\begin{equation}\label{eq:char}
\int_p^1 g_m(v) dv \ge \int_p^1 g_{m_q}(v) dv, \quad p \in [0,1],
\end{equation}
and 
\begin{equation}\label{eq:convorder}
\int_p^1 g_m(v) dv \ge \frac{C(1-p)}{C(1-p) + p}, \quad p \in [0,1].
\end{equation}
\end{thm}
The lower bound in \eqref{eq:convorder} and the integrated spectral functions of $m_p$ are illustrated in Figure \ref{fig:2}. We would like to give a short summary of the proof of Theorem \ref{thm:1}. The details are deferred to Section \ref{sec:proof}. 

For a two-point distribution $p\delta_x + (1-p)\delta_y$ it is possible to calculate that $\nu(p\delta_x + (1-p)\delta_y) = \alpha_p x + (1-\alpha_p)y$ for some $\alpha_p \in [0,1]$. It follows from the properties of $\nu$ as a risk measure that $\alpha_p > 0$. The assumption $\delta_0 \not\in\mathcal{M}$ is necessary to guarantee that for some $p \in (0,1)$ we have $\alpha_p < 1$. Exploiting the convexity of level sets as given by Theorem \ref{thm:elicconv}, first we show that $\alpha_p \in (0,1)$ for all $p \in (0,1)$, and then we derive an explicit formula for $\alpha_p$ in terms of $C$ and $p$. 

\begin{rem}
\citet{Weber2006} also studied law-invariant risk measures with convex level sets. His motivation came from considerations of dynamic consistency of risk measures, which he shows to be closely related to convexity of level sets. \citet[Theorem 3.1]{Weber2006} is more general than Theorem \ref{thm:1} in the sense that not only coherent risk measures are considered and that it provides a characterization instead of a necessary condition. However, his result requires regularity assumptions on the risk measure under consideration, which we do not impose in this work. See also the remarks in Section \ref{sec:spectral} and \ref{sec:expectiles}.
\end{rem}

\subsection{Spectral risk measures}\label{sec:spectral}

The first corollary to Theorem \ref{thm:1} shows that none of the coherent risk measures considered in \citet[Example 3]{PichlerShapiro2012} are elicitable, unless they reduce to minus the expected value.

\begin{cor}\label{cor:1}
Suppose the functional $\nu$ in Theorem \ref{thm:1} has convex level sets and there is a finite set $\mathcal{M}_0 \subset \mathfrak{P}$, $\delta_0 \not\in \mathcal{M}_0$ such that 
\[
\nu(P) = \inf_{m \in \mathcal{M}_0} \nu_m(P), \quad P \in \mathcal{P}^*,
\]
then $\mathcal{M}_0 =\{\delta_1\}$ and 
\[
\nu(P) = \E_P(Y).
\] 
\end{cor}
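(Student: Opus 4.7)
The plan is to apply Theorem~\ref{thm:1} directly with $\mathcal{M}=\mathcal{M}_0$ and derive a contradiction unless $\mathcal{M}_0$ collapses to the desired singleton. Since $\mathfrak{P}$ equipped with the weak topology is Hausdorff, every finite subset is automatically closed, so the hypotheses of Theorem~\ref{thm:1} are satisfied. The conclusion then produces a constant $C\in(0,1]$ such that every measure $m_p$ from \eqref{eq:mp}, for $p\in(0,1)$, lies in $\mathcal{M}_0$.

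The first step I would carry out is to force $C=1$. If instead $C<1$, then the coefficient $p(1-C)/(p(1-C)+C)$ of $\delta_p$ in $m_p$ is strictly positive, so the support of $m_p$ is exactly $\{p,1\}$. Consequently $m_p\neq m_q$ whenever $p\neq q$, and $\mathcal{M}_0$ would contain the uncountable family $\{m_p:p\in(0,1)\}$, contradicting finiteness. Hence $C=1$.

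Next I would invoke the two-sided bound $\mathfrak{g}_\ell \le g_m \le \mathfrak{g}_u$ supplied by Theorem~\ref{thm:1}. With $C=1$, both envelopes reduce to the constant function $1$ on $[0,1]$, so $g_m\equiv 1$ on $(0,1]$ for every $m\in\mathcal{M}_0$. Evaluating the definition $g_m(u)=\int_{[u,1]}(1/\alpha)\,m(d\alpha)$ at $u=1$ gives $m(\{1\})=g_m(1)=1$, whence $m=\delta_1$. Therefore $\mathcal{M}_0=\{\delta_1\}$ and, using \eqref{eq:num},
\[
\nu(P) \;=\; \nu_{\delta_1}(P) \;=\; U_1(Y) \;=\; \int_0^1 F_Y^{-1}(u)\, du \;=\; \E_P(Y),
\]
which is the claim.

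The main, and essentially only, obstacle is the simple but decisive observation that the measures $m_p$ are pairwise distinct when $C<1$; this is the mechanism through which Theorem~\ref{thm:1}, combined with the finiteness of $\mathcal{M}_0$, forces $C$ to the boundary value $1$ and pins down the risk measure as the expected value.
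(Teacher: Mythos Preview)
Your proof is correct and follows essentially the same route as the paper's: apply Theorem~\ref{thm:1} to the finite (hence closed) set $\mathcal{M}_0$, observe that for $C<1$ the measures $m_p$ are pairwise distinct and thus force $\mathcal{M}_0$ to be uncountable, and then use the coincidence $\mathfrak{g}_\ell=\mathfrak{g}_u\equiv 1$ at $C=1$ to pin down every $m\in\mathcal{M}_0$ as $\delta_1$. You simply spell out a few details (closedness of finite sets, the identification $g_m(1)=m(\{1\})$, and the evaluation $\nu_{\delta_1}=\E_P$) that the paper leaves implicit.
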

\begin{proof}
By Theorem \ref{thm:1} the closed set $\mathcal{M}_0$ is uncountable unless $C=1$. If $C=1$, then the lower bound in \eqref{eq:convorder} is $1-p$, which is the integrated spectral function of $\delta_1$. The claim follows directly from \citet[Lemma 2.2]{Dana2005}.
\end{proof}

Now, it follows easily that elicitable spectral risk measures are essentially the expected value.  

\begin{cor}\label{cor:2}
Spectral risk measures, other than minus the expected value, are not elicitable relative to any class of probability distributions that contains the two-point distributions.
\end{cor}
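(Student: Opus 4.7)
The plan is to reduce Corollary \ref{cor:2} directly to Corollary \ref{cor:1} applied with a singleton generating set. A spectral risk measure has the form $\rho = -\nu_m$ for a single $m \in \mathfrak{P}$ (with the convention $m(\{0\}) = 0$, consistent with \citet{Acerbi2002}), so the Kusuoka set collapses to a singleton and Corollary \ref{cor:1} can be applied with the trivially finite set $\mathcal{M}_0 := \{m\}$.

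First I would observe that if $\rho$ is elicitable relative to some class $\mathcal{P} \supseteq \mathcal{P}^*$, then it is a fortiori elicitable relative to the smaller class $\mathcal{P}^*$, and Theorem \ref{thm:elicconv} gives that $\rho$ has convex level sets there. The level sets of $\nu_m = -\rho$ coincide with those of $\rho$ up to the relabeling $t \mapsto -t$, so $\nu_m$ inherits convex level sets on $\mathcal{P}^*$ as well; this is precisely the hypothesis of Corollary \ref{cor:1}.

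Applying Corollary \ref{cor:1} to $\nu_m$ with $\mathcal{M}_0 = \{m\}$ then forces $m = \delta_1$, and a direct computation gives $\nu_{\delta_1}(Y) = U_1(Y) = \int_0^1 F_Y^{-1}(u)\, du = \E_P(Y)$, so $\rho(Y) = -\E_P(Y)$, contradicting the hypothesis that $\rho$ is not minus the expectation. There is no real obstacle: the heavy lifting has already been done in Theorem \ref{thm:1}, and specializing $\mathcal{M}$ to a singleton immediately collapses Corollary \ref{cor:1} to the desired statement. The only caveat worth flagging is that Theorem \ref{thm:1} (and hence Corollary \ref{cor:1}) tacitly excludes the boundary case $m = \delta_0$, that is, $\rho = -\essinf$; this is consistent with the standard convention on spectral risk measures used throughout the paper and so requires no separate treatment.
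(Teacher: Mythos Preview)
Your proposal is correct and follows exactly the paper's approach: the paper's proof is the single line ``This is a direct consequence of Theorem~\ref{thm:elicconv} and Corollary~\ref{cor:1},'' and you have simply unpacked that sentence---spectral risk measures correspond to a singleton $\mathcal{M}_0=\{m\}$, elicitability on $\mathcal{P}\supseteq\mathcal{P}^*$ passes down to $\mathcal{P}^*$ and yields convex level sets via Theorem~\ref{thm:elicconv}, and then Corollary~\ref{cor:1} forces $m=\delta_1$. Your caveat about the exclusion of $\delta_0$ is also consistent with the standing hypothesis of Theorem~\ref{thm:1}.
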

\begin{proof}
It is a direct consequence of Theorem \ref{thm:elicconv} and Corollary \ref{cor:1} that any spectral risk measure, which is not minus the essential infimum, is not elicitable unless it is minus the expected value. Therefore, it only remains to show that $\essinf(Y)$ is not an elicitable functional relative to the class of two-point distributions. Suppose the contrary, and let $S$ be a strictly consistent scoring function. If $Y = a$ almost surely for some $a \in \mathbb{R}$, then $\mathbb{E}S(x,Y) = S(x,a)$ and we obtain $S(a,a) < S(x,a)$ for all $x \in \mathbb{R}\setminus\{a\}$. If $Y$ has distribution $p\delta_a + (1-p)\delta_b$ with $a < b$ and $p \in (0,1]$, we obtain
\[
pS(x,a) + (1-p)S(x,b) > pS(a,a) + (1-p)S(a,b), \quad x \in  \mathbb{R}\setminus\{a\}.
\]
With $x = b$ and letting $p \to 0$, we obtain $S(b,b) \ge S(a,b)$, a contradiction.
\end{proof}

\begin{rem}
While the proof of Corollary \ref{cor:2} shows that $\essinf(Y)$ is not elicitable relative to the class of two-point distributions, it is easy to check that the interval $(-\infty,\essinf(Y)]$ is elicitable. Strictly consistent scoring functions are given at \eqref{eq:quantscore} with $\alpha = 0$ and any strictly increasing function $g$.
\end{rem}

\citet[Theorem 11]{Gneiting2011} shows that ES is not elicitable with respect to any class of probability measures that contains the measures with finite support, or the finite mixtures of absolutely continuous distributions with compact support. In both cases the proof is done by showing a violation of the necessary condition of convex level sets given in Theorem \ref{thm:elicconv}. We believe that it is possible to modify the proof of Theorem \ref{thm:1} using mixtures of absolutely continuous distributions with compact support instead of two-point distributions. However, the details remain to be worked out and are likely to be rather technical.

\begin{rem}
While the result that ES is not elicitable is due to \citet{Gneiting2011}, the non-convexity of its level sets already appears in \citet[Example 3.4]{Weber2006}.
\end{rem}

\subsection{Expectiles}\label{sec:expectiles}

Theorem \ref{thm:1} provides an upper and a lower bound on a potentially elicitable law-invariant coherent risk measures $\rho$ via the provided restrictions on the integrated spectral functions. This is illustrated in Figure \ref{fig:2}, and details are given below.

Let $\mathcal{M}\subset \mathfrak{P}$ be a closed set such that
\[
\rho(X) = - \inf_{m \in \mathcal{M}} \nu_m(X), \quad X \in L^{\infty}(\Omega,\mathcal{F},\mathbf{P}).
\]
By \citet[Lemma 2.2]{Dana2005} equation \eqref{eq:convorder} of Theorem \ref{thm:1} implies that there exists a $C \in (0,1]$ such that
\[
\rho(X) \le \mathfrak{u}_C(X) := -\int_0^1 \frac{C}{(v+C(1-v))^2}F_X^{-1}(v)dv.
\]
The map $\mathfrak{u}_C:L^{\infty}(\Omega,\mathcal{F},\mathbf{P}) \to \mathbb{R}$ is a spectral risk measure with spectral function $g(v) = C/(v+C(1-v))^2$ for $v \in (0,1]$. The associated measure $m \in \mathfrak{P}$ has density $2C(1-C)v/(v+C(1-v))^3$ on $(0,1)$ and a point mass $C$ at $v = 1$. The integrated spectral function $\int_p^1 g(v)dv$ is illustrated as a dashed line in Figure \ref{fig:2}. By Corollary \ref{cor:2}, the spectral risk measure $\mathfrak{u}_C$ is not elicitable unless $C = 1$. In this case, it reduces to minus the expected value. 

We define
\[
\mathcal{M}_C := \{m_p \in \mathfrak{P}\;|\; p \in (0,1)\},
\]
where $m_p$ is given at \eqref{eq:mp} and 
\begin{equation}\label{eq:lC}
\mathfrak{l}_C(X) := -\inf_{m \in \mathcal{M}_C} \nu_m(X), \quad X \in L^{\infty}(\Omega,\mathcal{F},\mathbf{P}). 
\end{equation}
By Theorem \ref{thm:1} we immediately obtain $\rho(X) \ge \mathfrak{l}_C(X)$. Invoking \citet[Lemma 2.2]{Dana2005} equation \eqref{eq:char} yields $\rho(X) \le \mathfrak{l}_C(X)$, hence $\rho(X) = \mathfrak{l}_C(X)$.
In the remainder of this section we characterize the law-invariant coherent risk measure $\mathfrak{l}_C$.

As introduced in \citet{NeweyPowell1987}, the \emph{$\tau$-expectile} $\mu_{\tau}(X)$, $\tau \in (0,1)$, of a random variable with finite mean is the unique solution $x = \mu_{\tau}(X)$ to the equation
\begin{equation}\label{eq:expectile}
\tau \int_x^{\infty} (y-x) dF_X(y) = (1-\tau)\int_{-\infty}^x (x-y) dF_X(y).
\end{equation}
\citet{BelliniKlarETAL2013} show that (up to the sign) expectiles are law-invariant coherent risk measures for $\tau \in (0,1/2]$. As mentioned in introduction, expectiles are elicitable. The scoring functions that are consistent for $\tau$-expectiles were recently characterized by \citet[Theorem 10]{Gneiting2011}. Subject to some regularity and integrability conditions, they are given by
\[
s(x,y) = |\mathbbm{1}\{x\ge y\} - \tau|(g(y) - g(x) - g'(x)(y-x)),
\]
where $g$ is a convex function with subgradient $g'$.
The prominent role of expectiles as the only elicitable law-invariant coherent risk measures is underlined by the following proposition.

\begin{prop}\label{prop:4.4}
The law-invariant coherent risk measure $\mathfrak{l}_C$ defined at \eqref{eq:lC} is minus the $\tau$-expectile for 
\[
\tau := \frac{C}{C+1} \;\in (0,\textstyle{\frac{1}{2}}].
\]
\end{prop}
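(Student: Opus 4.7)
The plan is to substitute the explicit two-atom measure $m_p$ into the definition of $\nu_{m_p}$, rewrite $\nu_{m_p}(X)$ as a quotient that is linear in $p$ in both numerator and denominator, and show via an elementary optimization identity combined with the defining equation \eqref{eq:expectile} that $\inf_{p \in (0,1)} \nu_{m_p}(X) = \mu_\tau(X)$. By \eqref{eq:lC}, this then yields $\mathfrak{l}_C(X) = -\mu_\tau(X)$.

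The substitution is direct: using $U_p(X) = \frac{1}{p}\int_0^p F_X^{-1}(u)\,du$ and $U_1(X) = \E[X]$, the weights in $m_p$ combine to give
$$\nu_{m_p}(X) = \frac{N(p)}{D(p)}, \quad N(p) := (1-C)\int_0^p F_X^{-1}(u)\,du + C\,\E[X], \quad D(p) := p(1-C) + C.$$
The key inequality is that for any real $c$, the map $p \mapsto \int_0^p (F_X^{-1}(u) - c)\,du$ attains its minimum $-\E[(c - X)_+]$ exactly on the interval $[F_X(c-), F_X(c)]$ of $F_X$-quantiles of $c$. Setting $m := \mu_\tau(X)$ and writing $\E[X] - m = \E[(X-m)_+] - \E[(m-X)_+]$, we obtain
$$N(p) - m\,D(p) = (1-C)\int_0^p (F_X^{-1}(u) - m)\,du + C(\E[X] - m) \ge C\,\E[(X-m)_+] - \E[(m-X)_+].$$
By \eqref{eq:expectile} with $\tau = C/(C+1)$, so that $\tau/(1-\tau) = C$, the right-hand side vanishes. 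Hence $\nu_{m_p}(X) \ge \mu_\tau(X)$ for every $p \in (0,1)$.

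It remains to verify attainment. For nondegenerate bounded $X$ and $\tau \in (0, 1/2]$, one has $\essinf X < \mu_\tau(X) \le \E[X] < \operatorname{ess\,sup} X$, so $F_X(m-) > 0$ and $F_X(m) < 1$, and any $p \in [F_X(m-), F_X(m)] \subset (0,1)$ turns the above inequality into equality. The degenerate case (constant $X$) is trivial, as is $C = 1$, where $m_p = \delta_1$ for every $p$ and $\mu_{1/2}(X) = \E[X]$. The main technical point will be confirming this strict positioning of $\mu_\tau(X)$ inside the essential range of $X$, so that the infimum in \eqref{eq:lC} is genuinely attained rather than merely approached from the endpoints of the parameter interval.
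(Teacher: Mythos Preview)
Your proof is correct and follows essentially the same strategy as the paper: both compute $\nu_{m_p}(X)$ explicitly, show it is minimized at $p^*=F_X(\mu_\tau(X))$, and verify that the minimum value is $\mu_\tau(X)$ using the defining equation \eqref{eq:expectile}. Your packaging of the optimization step via the convex-analytic identity $\min_{p}\bigl\{\int_0^p F_X^{-1}(u)\,du - cp\bigr\} = -\E[(c-X)_+]$ is more concise than the paper's hands-on computation (which unrolls the same fact through \citet{AcerbiTasche2002} and integration by parts), but the underlying argument is identical.
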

\begin{proof}
Let $X$ a random variable with finite first moment, $F:=F_X$, and $\mu := \mu_{\tau}(X)$ its $\tau$-expectile with $\tau=C/(C+1)$. We define
\[
p^* := F(\mu).
\]
We will show that $\nu_{m_{p^*}}(X) = \mu$ and that $\nu_{m_p}(X)$ is minimal at $p = p^*$. If $F$ is continuous the latter claim can alternatively be shown by methods of calculus. We show the claims directly in order to avoid case distinctions. 

For $p \in (0,1)$, we obtain with $m_p$ defined at \eqref{eq:mp}
\begin{align*}
\nu_{m_p}(X) &= \frac{1}{p(1-C)+p}\int_0^p F^{-1}(v) dv + \frac{C}{p(1-C)+p}\int_p^1 F^{-1}(v) dv\\
&= \frac{1}{p(1-C)+p}\int_{-\infty}^{F^{-1}(p)} y dF(y) + \frac{C}{p(1-C)+p}\int_{F^{-1}(p)}^{\infty} y dF(y)\\&\quad + \frac{1-C}{p(1-C)+p}F^{-1}(p)(p - F(F^{-1}(p))),
\end{align*}
where we used \citet[Proposition 3.2]{AcerbiTasche2002} in the second step. 

\citet[Theorem 1]{NeweyPowell1987} show that $p^* \in (0,1)$. Using \eqref{eq:expectile} we obtain
\begin{align}
\int_{-\infty}^{\mu}& y dF(y) + C\int_{\mu}^{\infty}ydF(y)\nonumber \\&= \int_{-\infty}^{\mu} y dF(y) + C\int_{\mu}^\infty \mu dF(y) + \int_{-\infty}^{\mu} \mu dF(y) - \int_{-\infty}^{\mu} y dF(y)\nonumber\\
&= C\mu(1 - F(\mu)) + \mu F(\mu) = \mu(p^*(1-C) + C). \label{eq:integrals}
\end{align}
Let $\varepsilon \ge 0$ such that $p:= p^* + \varepsilon \in (0,1)$. Then, using equation \eqref{eq:integrals} and partial integration, we obtain
\begin{align*}
\nu_{m_p}(X) &= \frac{1}{p(1-C)+C}\Big(\int_{-\infty}^\mu y dF(y) + C\int_{\mu}^{\infty}y dF(y)\\&\qquad+ \int_{\mu}^{F^{-1}(p)}ydF(y) - C\int_{\mu}^{F^{-1}(p)}y dF(y) \\&\qquad+ (1-C)F^{-1}(p)(p - F(F^{-1}(p)))\Big)\\
&= \frac{\mu(p^*(1-C)+C)}{p(1-C) + C}\\&\quad + \frac{1-C}{p(1-C)+C}\Big( F^{-1}(p)F(F^{-1}(p))-\mu F(\mu) \\&\qquad- \int_{\mu}^{F^{-1}(p)}F(t)dt + F^{-1}(p)p - F^{-1}(p)F(F^{-1}(p))\Big)\\
&= \mu - \frac{\mu\varepsilon(1-C)}{p(1-C) + C}\\
&\quad+\frac{1-C}{p(1-C)+C}\Big(F^{-1}(p)p - \mu p + \mu \varepsilon - \int_{\mu}^{F^{-1}(p)}F(t)dt\Big)\\
&= \mu + \frac{1-C}{p(1-C)+C}\Big(p(F^{-1}(p) - \mu ) - \int_{\mu}^{F^{-1}(p)}F(t)dt\Big).
\end{align*}
The last term in the above equation is always non-negative. It vanishes for $p = p^*$, hence
$\nu_{m_{p^*}}(X)\le \nu_{m_p}(X)$ for all $p \ge p^*$. The argument for $p \le p^*$ is completely analogous.
\end{proof}

\begin{rem}
Expectiles as coherent risk measures also appear implicitly in \citet[Corollary 3.2]{Weber2006}. He shows that the shortfall risk measure with loss function $\ell(x) = \alpha x^+ - \beta x^-$ for $\alpha \ge \beta > 0$ is coherent. Such a shortfall risk measure is equal to the minus the $\tau$-expectile with $\tau=\beta/(\alpha+\beta)$. However, \citet{Weber2006} did not draw the connection to the expectiles \citep{NeweyPowell1987} in the statistical literature. Under the additional regularity assumptions (3.1) and (1) of \citet[Theorem 3.1]{Weber2006}, \citet[Corollary 3.1]{Weber2006} characterizes all coherent risk measures with convex level sets as minus $\tau$-expectiles with $\tau \in (0,1/2]$. Theorem \ref{thm:1} shows that these conditions are not necessary for the characterization in the coherent case.
\end{rem}

\begin{figure}
\begin{center}
\includegraphics{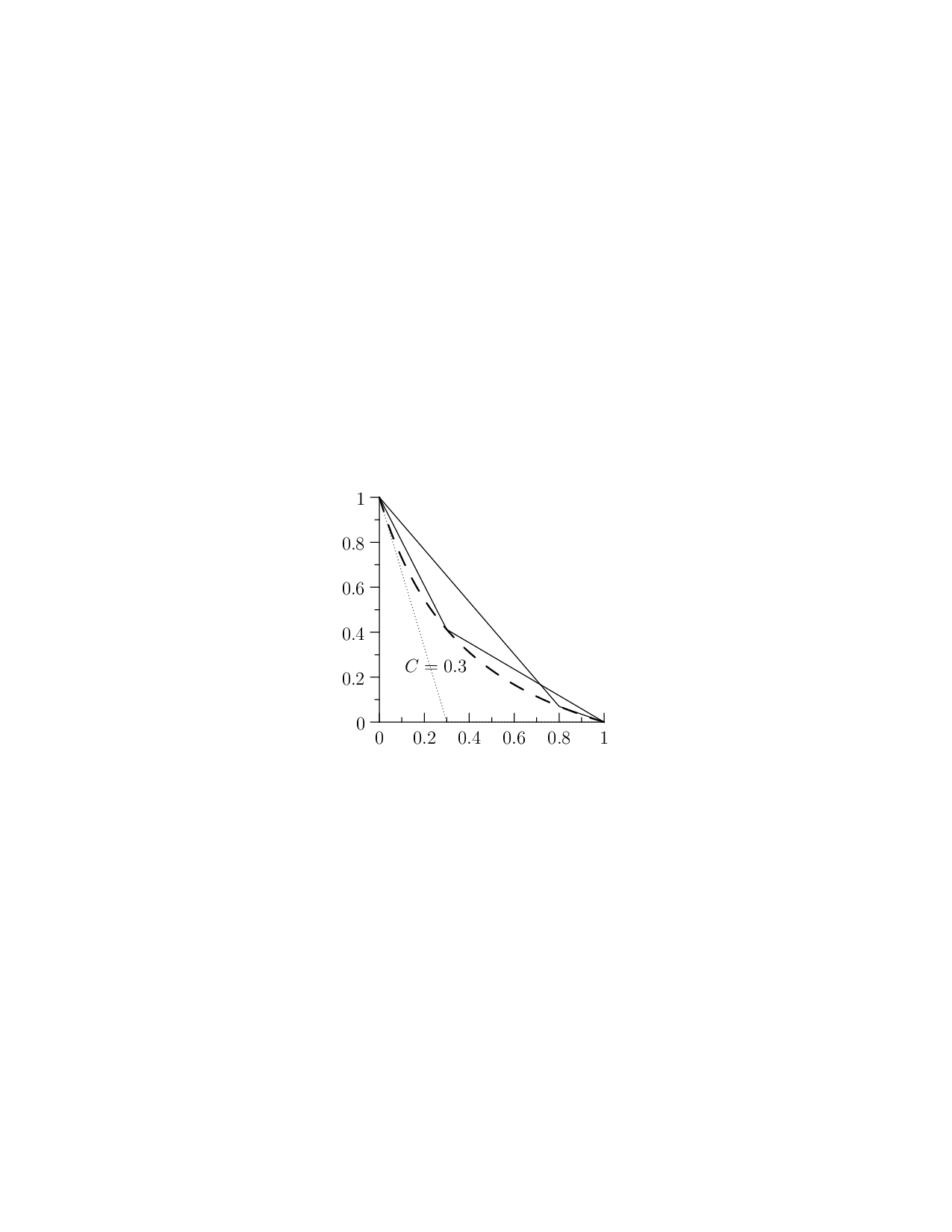}
\includegraphics{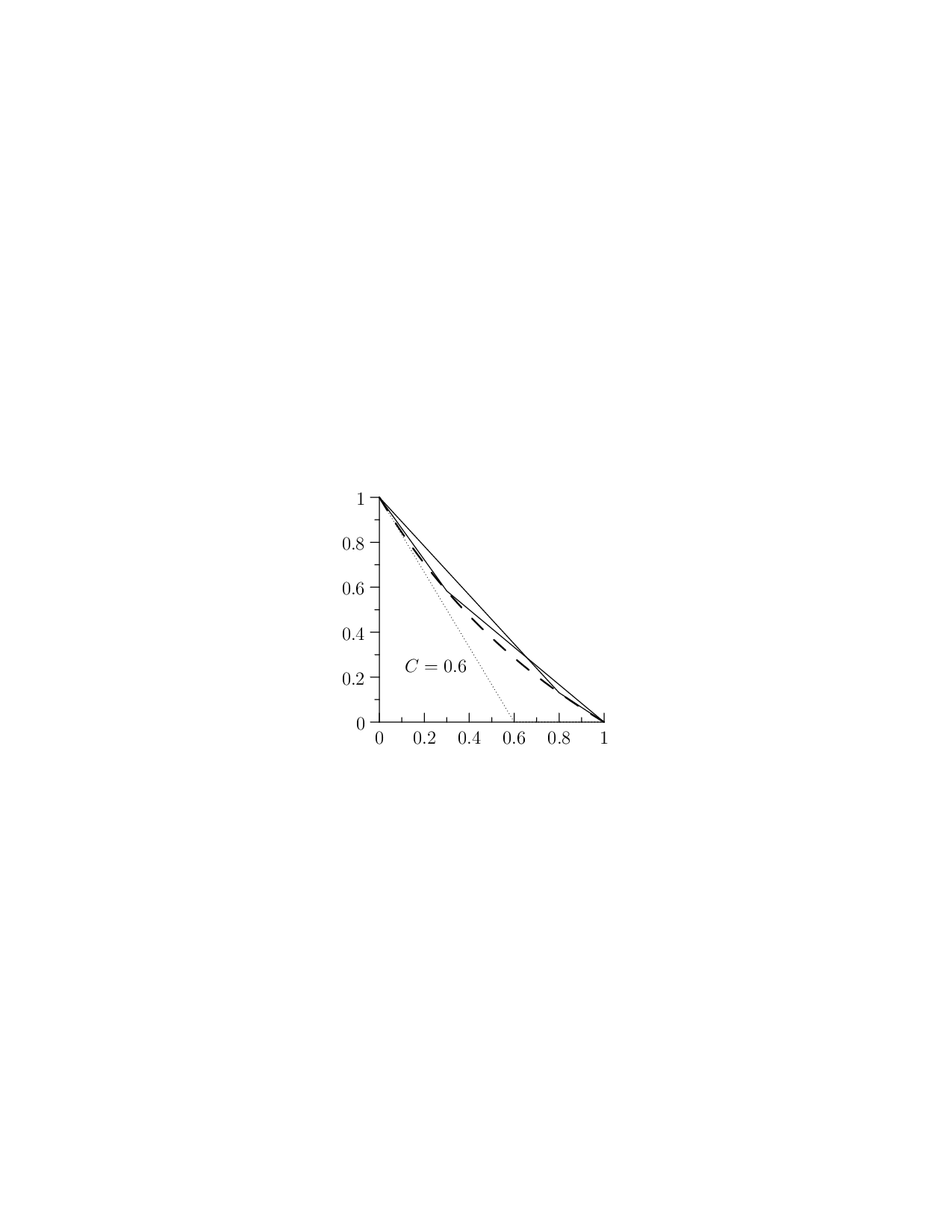}
\end{center}
\caption{Integrated spectral functions. In both panels, the dashed lines are the integrated spectral functions of $-\mathfrak{u}_C$, and the solid lines are those of $\nu_{m_{0.3}}$ and $\nu_{m_{0.8}}$ as examples. For comparison, the dotted line is the integrated spectral function of $-ES_C$. \label{fig:2}}
\end{figure}

\subsection{Proof of Theorem \ref{thm:1}}\label{sec:proof}
For each $m \in \mathcal{M}$, $0 < p_1 \le p_2 \le 1$, we define
\[
P_m(p_1,p_2) =  \int_{p_1}^{p_2}g_m(v)dv, \quad P_m(0,p_2) = \int_0^{p_2}g_m(v)dv + m(\{0\})
\]
Using Fubini we obtain
\begin{align}
P_m(p_1,p_2) &= \int_{p_1}^{p_2} \int_{[v,1]}\frac{1}{\alpha}m(d\alpha) dv\nonumber\\
&= \int_{[p_1,1]} \frac{1}{\alpha} \int_{p_1}^{\alpha\wedge p_2} dv m(d\alpha)= \int_{[p_1,1]} \frac{\alpha\wedge p_2-p_1}{\alpha}  m(d\alpha) \label{ex:4.4}
\end{align}
For all $q \in (0,1]$, the set 
\[
\mathcal{C}_q = \{(P_m(0,q),P_m(q,1))\in \mathbb{R}^2\;|\; m \in \mathcal{M}\}
\]
is a subset of the unit simplex in $\mathbb{R}^2$ because $P_m(0,q) + P_m(q,1) =  1$. The set $\mathcal{C}_q$ is also closed, which can be seen using Helly's theorem, the fact that $\mathcal{M}$ is closed, and the representation of of $P_m(q,1)$ at \eqref{ex:4.4}.
Let $\alpha_q:= \sup\{\alpha \in [0,1]\;|\; (\alpha,1-\alpha)\in \mathcal{C}_q\}$ be the lower boundary point of $\mathcal{C}_q$; see Figure \ref{fig:1} for an illustration. As $\mathcal{C}_q$ is closed, the supremum is attained and there is an $m_q \in \mathcal{M}$ such that $P_{m_q}(0,q) = \alpha_q$. Note that $\alpha_q > 0$. Suppose the contrary, then 
\[
\int_0^{q}g_{m_q}(v)dv + m_q(\{0\}) = 0,
\]
which implies $m_q(\{0\})=0$ and $g_{m_q}(v) = 0$ for $v \in (0,q]$, hence $g_{m_q}\equiv 0$ because $g_{m_q}$ is decreasing and non-negative. This is a contradiction because 
\[
\int_0^1 g_m(v) dv  + m(\{0\}) = m([0,1]) = 1.
\] 
The function $(0,1] \to (0,1], q \mapsto \alpha_q$ is increasing.
Define
\begin{equation}\label{eq:qstar}
q^* := \inf \{q \in (0,1]\;|\; \alpha_q = 1\}.
\end{equation}
If $q^* = 0$, then $\alpha_q = 1$ for all $q \in (0,1]$, hence $1-\alpha_q = \int_q^1 g_{m_q}(v)dv = 0$. This implies $m_q((q,1]) = 0$ for all $q \in (0,1]$, and hence $m_q$ converges weakly to $\delta_0$ as $q \to 0$. As $\mathcal{M}$ is closed this is a contradiction to the assumption $\delta_0 \not\in\mathcal{M}$. Therefore, $q^* > 0$. We will conclude later that, actually, $q^*=1$.

\begin{figure}
\centering
\includegraphics{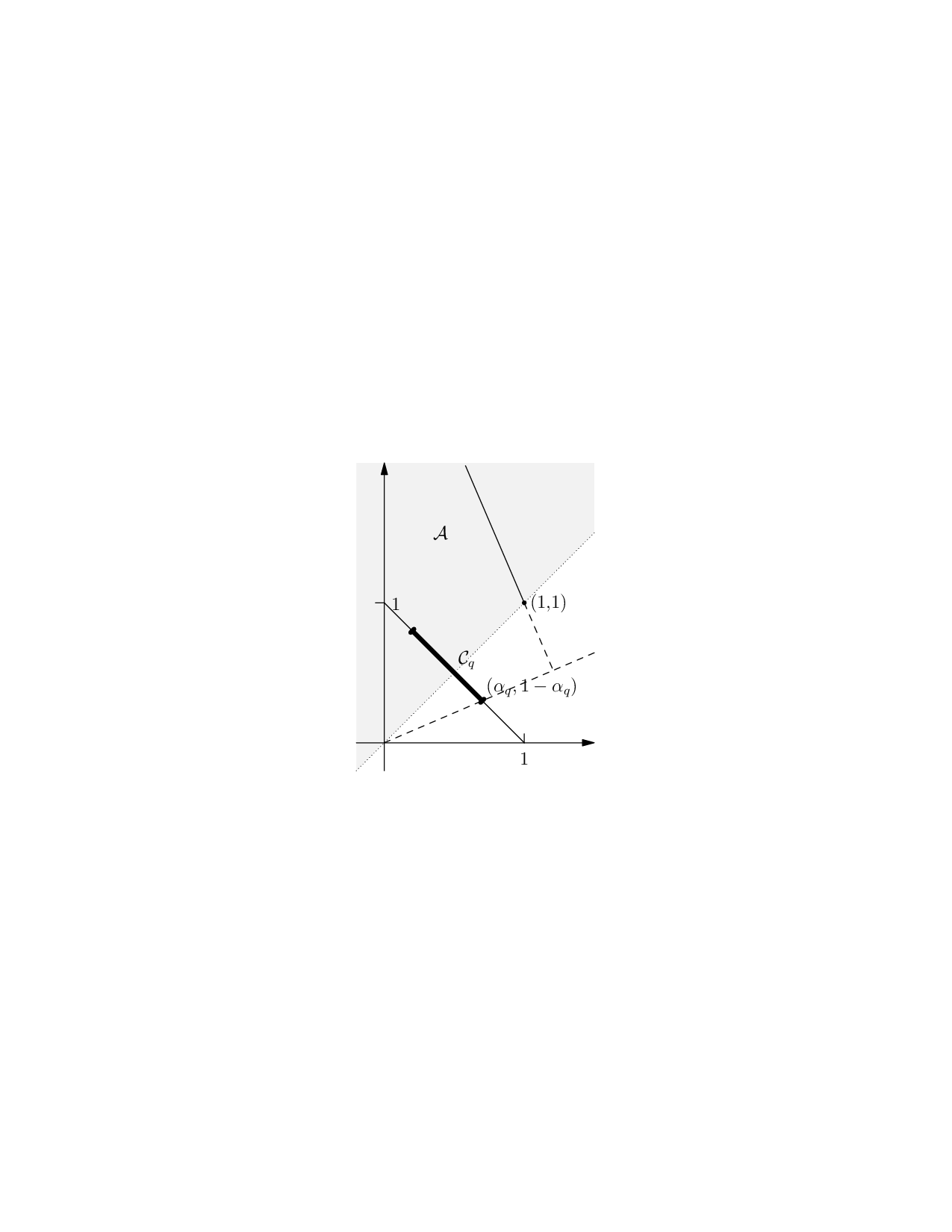}
\caption{Illustration of the construction in the proof of Theorem \ref{thm:1}.\label{fig:1}}
\end{figure}

Let $\mathcal{A} = \{x = (x_1,x_2) \in \mathbb{R}^2\;|\; x_1 \le x_2\}$. For $(x_1,x_2) \in \mathcal{A}$, $p \in [0,1]$ the distribution function $F$ of $p\delta_{x_1} + (1-p)\delta_{x_2}$ has generalized inverse
$F^{-1}(v) = x_1\mathbbm{1}\{v \le p\} + x_2\mathbbm{1}\{v > p\}$, $v \in (0,1]$.
This yields using \eqref{eq:altrep}
\begin{align*}
\nu_m(p\delta_{x_1} + (1-p)\delta_{x_2}) &= x_1\int_0^p g_m(v) dv + x_2 \int_p^1 g_m(v) dv + m(\{0\})x_1\\
&= P_m(0,p) x_1 + P_m(p,1)x_2,
\end{align*}
hence
\begin{align*}
\nu(p\delta_{x_1}+(1-p)\delta_{x_2}) &= \inf_{m \in \mathcal{M}} \nu_m(p\delta_{x_1}+(1-p)\delta_{x_2})\\ &= \inf_{(\alpha,1-\alpha) \in \mathcal{C}_p} \big(\alpha x_1 + (1-\alpha) x_2\big)\\ &= \alpha_p x_1 + (1-\alpha_p) x_2 = \nu_{m_p}(p\delta_{x_1}+(1-p)\delta_{x_2}).
\end{align*}

Let $x_1 = x_2 = 1$, $q \in (0,q^*)$. Then $0<\alpha_q < 1$. All $y = (y_1,y_2) \in \mathcal{A}$ with $\nu(\delta_1) = 1 = \nu(q\delta_{y_1}+(1-q)\delta_{y_2})$ are given by
\[
y_1 = 1 - c, \quad y_2 = M y_1 + a
\]
for $M = -\alpha_q/(1-\alpha_q) \in (-\infty,0)$, $a = 1 - M$ and $c \ge 0$; cf.~Figure \ref{fig:1}. Note that
$y_1 \le x_1 \le x_2 \le y_2$.
Convexity of the level sets of $\nu$ implies that for all $m \in \mathcal{M}$, all $v \in (0,1]$ and all $c > 0$, we have
\begin{align}
1 &\le \nu_{m}\big(v(q\delta_{y_1}+ (1-q)\delta_{y_2}) + (1-v)(q\delta_{x_1} + (1-q)\delta_{x_2}) \big)\nonumber \\
&= P_m(0,qv) y_1 + P_m(qv,q) x_1 + P_m(q,1 - v + qv) x_2 + P_m(1 - v + qv,1) y_2\nonumber\\
&= \nu_m(\delta_1) + P_m(0,qv) (y_1-x_1) + P_m(1 - v + qv,1) (y_2-x_2)\nonumber\\
&= 1 - cP_m(0,qv) - cMP_m(1 - v + qv,1),\nonumber
\end{align}
hence, 
\begin{equation}\label{eq:34}
P_m(0,qv) \le \frac{\alpha_q}{1-\alpha_q}P_m(1 - v + qv,1).
\end{equation}
We have $\lim_{v \downarrow 0}P_m(0,qv) = m(\{0\})$, and $\lim_{v \downarrow 0} P_m(1 - v + qv,1) = P_m(1,1) = 0$,
hence it follows that $m(\{0\})=0$ for all $m \in \mathcal{M}$. Equation \eqref{eq:34} also implies that $g_m(v) > 0$ for all $v \in (0,1)$, $m \in \mathcal{M}$. Suppose the contrary. Then there is a $w^* \in(0,1)$ such that $g_m(v) = 0$ for all $v \in (w^*,1]$, which implies $P_m(1-v^* + qv^*,1)=0$ for some $v^* \in (0,1)$. Now \eqref{eq:34} yields $g_m \equiv 0$, which is a contradiction because $m(\{0\})=0$. 
Going back to the definition of $q^*$ at \eqref{eq:qstar} we obtain in particular that $1-\alpha_p = \int_p^1 g_{m_p}(v) dv > 0$ for all $p \in (0,1)$. Therefore, $q^* = 1$.

For $m = m_q$ we obtain
\[
P_{m_q}(0,qv) \le \frac{P_{m_q}(0,q)}{P_{m_q}(q,1)}P_{m_q}(q + (1-q)(1-v),1),
\]
hence 
\begin{multline*}
P_{m_q}(0,qv)\big(P_{m_q}(q,1 - v + qv)+P_{m_q}(1 - v + qv,1)\big) \\\le \big(P_{m_q}(0,qv)+P_{m_q}(qv,q)\big)P_{m_q}(1 - v + qv,1)
\end{multline*}
which yields
\[
P_{m_q}(0,qv)P_{m_q}(q,1 - v + qv) \le P_{m_q}(qv,q)P_{m_q}(1 - v + qv,1).
\]
Monotonicity of $g_m$ yields
\begin{align*}
qv g_{m_q}(qv) (1-q)(1-v)&g_{m_q}(1 - v + qv) \le P_{m_q}(0,qv)P_{m_q}(q,1 - v + qv)\\
&\le P_{m_q}(qv,q)P_{m_q}(q + (1-q)(1-v),1) \\&\le q(1-v)g_{m_q}(qv+) v(1-q) g_{m_q}((1 - v + qv)+),
\end{align*}
hence we obtain
\begin{equation}\label{eq:Pmq}
P_{m_q}(0,qv)P_{m_q}(q,1 - v + qv) = P_{m_q}(qv,q)P_{m_q}(1 - v + qv,1).
\end{equation}
Equation \eqref{eq:Pmq} implies that
\begin{equation}\label{eq:cq1}
c_q := \frac{\alpha_q}{1-\alpha_q} = \frac{P_{m_q}(0,q)}{P_{m_q}(q,1)} = \frac{P_{m_q}(0,qv)}{P_{m_q}(1 - v +qv,1)} 
= \frac{P_{m_q}(qv,q)}{P_{m_q}(q,1 - v +qv)},
\end{equation}
therefore
\begin{align*}
\frac{q}{1-q}\frac{g_{m_q}(qv)}{g_{m_q}((1-v+qv)+)}&\le \frac{P_{m_q}(0,qv)}{P_{m_q}(1 - v +qv,1)} = c_q = \\
&= \frac{P_{m_q}(qv,q)}{P_{m_q}(q,1 - v +qv)} \le \frac{q}{1-q}\frac{g_{m_q}(qv+)}{g_{m_q}(1-v+qv)},
\end{align*}
and hence
\begin{equation}\label{eq:cq}
c_q(1-q) g_{m_q}((1-v(1-q))+) = q g_{m_q}(qv)
\end{equation}
for all $v \in (0,1]$. The left-hand side is increasing in $v$, whereas the right-hand side is decreasing in $v$. Both sides are left-continuous. This implies that
\[
g_{m_q}(w) = \begin{cases}c_1, & w \in (0,q],\\
c_2, & w \in (q,1], \end{cases}
\]
for two constants $c_1 \ge c_2$. We have
\[
1 = m_q((0,1]) = \int_0^1 g_m(v) dv = qc_1 + (1-q)c_2
\]
and by \eqref{eq:cq}
\[
c_q(1-q)c_2 = q c_1,
\]
hence
\[
1 = c_q(1-q)c_2 + (1-q)c_2 = c_2(1-q)(1+c_q)
\]
which yields
\begin{equation}\label{eq:gmq}
g_{m_q}(w) = \begin{cases}\alpha_q/q, & w \in (0,q],\\
(1-\alpha_q)/(1-q), & w \in (q,1]. \end{cases}
\end{equation}
The inequality $c_1 \ge c_2$ is equivalent to $\alpha_q \ge q$.
By the definition of $g_{m_q}$ and \eqref{eq:gmq} we obtain that
\[
m_q = d_{q,1} \delta_q + d_{q,2} \delta_1,
\]
where $d_{q,1}$, $d_{q,2}$ fulfil $d_{q,1} + d_{q,2} = 1$,
\[
d_{q,2} = g_{m_q}(1) = \frac{1-\alpha_q}{1 - q}.
\]
Therefore
\[
d_{q,1} = 1- d_{q,2} = \frac{\alpha_q-q}{1 - q}.
\]

For any $m \in \mathcal{M}$ equation \eqref{eq:34} implies
\begin{equation}\label{eq:14}
\frac{q}{1-q}\frac{g_m(qv)}{g_m((1-v + qv)+)} \le \frac{P_m(0,qv)}{P_m(1-v+qv,1)} \le \frac{\alpha_q}{1-\alpha_q}.
\end{equation}
If $p < q \in (0,1)$, let $v = p/q$. Then the above inequality implies
\[
\frac{q}{1-q} \frac{\alpha_p(1-p)}{(1-\alpha_p)p} \le \frac{\alpha_q}{1-\alpha_q}.
\]
On the other hand, we also obtain with $v < (1-p)/(1-q)$, that
\[
\frac{p}{1-p} \frac{\alpha_q(1-q)}{(1-\alpha_q)q} \le \frac{\alpha_p}{1-\alpha_p}.
\]
Hence for $p,q \in (0,q^*)$ we obtain that
\[
\frac{(1-\alpha_q)q}{\alpha_q(1-q)} = \frac{(1-\alpha_p)p}{\alpha_p(1-p)} =: C \in (0,1].
\]
If $C = 0$, this implies $\alpha_p = 1$, which is a contradiction.
This means that we can express $\alpha_p$ in terms of $C$ and $p \in (0,1)$ as
\[
\alpha_p = \frac{p}{C(1-p) + p}.
\]

Let $m \in \mathcal{M}$. To show equation \eqref{eq:char}, observe that $m(\{0\})=0$ implies that $\int_0^1 g_m(v)dv = 1$. Therefore, it suffices to show that there exists $q \in (0,1)$ with $g_m(0+)\le g_{m_q}(0+)$ and $g_m(1) \ge g_{m_q}(1)$ due to the convexity of the integrated spectral function $p \mapsto \int_p^1 g_m(v)dv$ and the piecewise linearity of $p \mapsto \int_p^1 g_{m_q}(v)dv$; cf.~Figure \ref{fig:2}.

Equation \eqref{eq:14} implies 
\[
g_m(pv) \le \frac{1}{C}\, g_m(1 - v + pv)
\]
for $p \in (0,1)$, $v \in (0,1]$. Taking the limit as $v \downarrow 0$ yields
\[
1 \le g_m(0+) \le \frac{1}{C}\, g_m(1) \le \frac{1}{C}.
\]
Suppose first that $C \in (0,1)$. In this case we have that $g_m(0+) < 1/C$. If we suppose on the contrary that $g_m(0+) = 1/C$ it follows that $g_m(1) = 1$, hence $m = \delta_1$, which in turn implies $g_m(0+)= 1$, a contradiction. If $g_m(0+) = 1$, then \eqref{eq:char} holds for all $q \in (0,1)$. If $g_m(0+) > 1$, then there exists a $q \in (0,1)$ such that $g_{m_q}(0+) = 1/(C(1-q) + q) = g_m(0+)$. Furthermore,
it follows that
\[
C g_m(0+) = \frac{C}{C(1-q) + q} = g_{m_q}(1) \le g_m(1).
\]
The case $C=1$ is easy.

The last claim of the theorem follows because
\[
1- \alpha_p = \frac{C(1-p)}{C(1-p)+p} = \inf_{m \in \mathcal{M}} \int_p^1 g_m(v) dv.
\]
\hfill \qed

\section{Discussion}\label{sec:disc}

In this paper we have shown that spectral risk measures are not elicitable, so it is unclear if and how it is possible to rank different point forecasts for such measures in a decision theoretically sound manner. In other words, objective comparison of competing estimation procedures for spectral risk measures is difficult, if not impossible.  This does not imply that it is impossible to perform backtests for a specific estimation procedure under fixed model assumptions. However, if one needs to decide between two estimation methods, the values of the test statistic used for backtesting should not be used for a quality ranking of the methods. Such a ranking should only be done through consistent scoring functions, which do not exist for spectral risk measures. 

A possible solution to the problem could be working with probabilistic forecasts for $Y$ in the following way. Suppose that an empirical loss distribution $\hat{F}_Y$ for $F_Y$ has been estimated.  Usually, using this estimated distribution, the forecast $\rho(\hat{F}_Y)$ for the risk measure $\rho(Y) = \rho(F_Y)$ of $Y$ is calculated, and all further verification and backtesting of the model and estimation procedure are solely based on $\rho(\hat{F}_Y)$. Alternatively, one could directly assess the probabilistic forecast $\hat{F}_Y$; see for example \citet{DieboldGuntherETAL1998,Berkowitz2001}. Competing forecasts could be compared using proper scoring rules; see for example \citet{GneitingRaftery2007}. \citet{GneitingRanjan2011a} propose a method to compare density forecasts with emphasis on different regions of interest, such as the center or the tails of the distributions. If the aim is to accurately predict a functional focussed on the tails, such as ES, this approach seems promising. 

\citet{McNeilFrey2000} propose a statistic, called \emph{exceedance residuals}, for backtesting ES, which can be interpreted as a score, and is used as such in \citet[Section 5]{ChunShapiroETAL2012}; see also \citet[Section 4.4.3]{McNeilFreyETAL2005}. This score is not a scoring function in the sense of this paper, as it depends on two functionals of the future outcome $Y$, namely on $\VaR_{\alpha}(Y)$ and $\ES_{\alpha}(Y)$. Potentially, it is necessary and useful to extend the concept of elicitability in order to put their construction into a decision theoretic framework. A useful notion may be \emph{$k$-elicitability} as introduced in \citet[Definition 11]{LambertPennockETAL2008}.  It is an interesting open question whether the bivariate functional of $\VaR_{\alpha}$ and $\ES_{\alpha}$ is 2-elicitable. 

Finally, \citet{ContDeguestETAL2010} propose to rethink the necessity of the subadditivity axiom as it is in conflict with robustness of risk measurement procedures for spectral risk measures; cf.~Section \ref{sec:intro}. Supporting their suggestion, we believe that elicitability is also a crucial requirement that should be taken into consideration when choosing a risk measure, if probabilistic forecast evaluation is not an option. Given their appealing statistical properties, the potential of expectiles as risk measures should be further investigated.

\bibliographystyle{plainnat}
\bibliography{biblio}

\end{document}